\newtheorem{theorem}{Theorem}
\title{The Domino Effect in Decentralized Wireless Networks}
\author{\IEEEauthorblockN{Brage Ellings\ae ter\IEEEauthorrefmark{1} and Torleiv Maseng\IEEEauthorrefmark{2}}
\IEEEauthorblockA{\IEEEauthorrefmark{1}UNIK-University Graduate Center at Kjeller, University of Oslo, Norway}
\IEEEauthorblockA{\IEEEauthorrefmark{2}Norwegian Defence Research Establishment}\texttt{brage@unik.no}}
\begin{document}
\maketitle
\begin{abstract}
Convergence of resource allocation algorithms is well covered in the literature as convergence to a steady state is important due to stability and performance. However, research is lacking when it comes to the propagation of change that occur in a network due to new nodes arriving or old nodes leaving or updating their allocation. As change can propagate through the network in a manner similar to how domino pieces falls, we call this propagation of change the domino effect. In this paper we investigate how change at one node can affect other nodes for a simple power control algorithm. We provide analytical results from a deterministic network as well as a Poisson distributed network through percolation theory and provide simulation results that highlight some aspects of the domino effect. The difficulty of mitigating this domino effect lies in the fact that to avoid it, one needs to have a margin of tolerance for changes in the network. However, a high margin leads to poor system performance in a steady-state and therefore one has to consider a trade-off between performance and propagation of change.
\end{abstract}

\section{Introduction}
Convergence to a stable state is desirable when considering resource allocation in decentralized wireless networks since rapid transmitter configuration is a costly operation in terms of overhead and energy consumption. Stable states (or equilibria) also allow us to analyze the performance of specific resource allocation algorithms efficiently in terms of the stable states reached by the algorithms. Thus much research has been devoted to the analysis of convergence of resource allocation algorithms \cite{Foschini1993}\cite{Huang:1998:RCM:274740.274742}\cite{1459062}\cite{Shum2007}\cite{Johansson2012}. The desirable feature of such algorithms is fast convergence rate, convergence to an efficient stable state and convergence independent of the initial conditions. 

An aspect overlooked in the literature on physical layer resource allocation algorithms is the consequence of a change in the network and how it affects the allocation of other nodes in the network. Imagine a new node enters the network or an old node leaves or updates its allocation (e.g. changes power to satisfy a new SNR requirement), how does this affect the nodes in the network? Specifically we are interested in the propagation of change in networks due to such updates. When an update occurs, change propagates through the network in a way similar to a "domino" effect, changing the allocation of nodes in the network.

It is desirable that resource allocation algorithms produce a minimal domino effect, meaning that as few nodes as possible must adapt to a change in the network. Considering large scale networks on the order of a hundred links, if an update at one link results in all other links changing their allocation as a consequence, stable states in the network will not occur if convergence time is longer than the frequency at which updates occur. One way to mitigate such a scenario would be for all nodes to allocate resources with a buffer or margin, similar to a fading margin. However, adding such a margin will lead to a loss in performance in stable states, and it is also difficult to say how large this margin should be, as it depends on a number of variables. Thus, we postulate that there will have to be a trade-off between the performance of steady states and how large a domino effect one can tolerate. To the best of the authors' knowledge, no work has been published in the literature which investigates these issues.  

We start the investigation by presenting a deterministic network where we can analytically characterize the domino effect. We then investigate a network where nodes are distributed in the plane according to a Poisson Point Process. One of our main results are obtained through percolation theory, which has been applied to wireless communication to study the connectivity of wireless multi-hop networks \cite{Dousse2005}\cite{Dousse2006}. We show that there exists a critical density of nodes in the network so that for densities lower than this critical value, the propagation of change is finite almost surely (a.s.), and for densities larger than this critical value, the propagation of change is infinite a.s. Through simulation results we also note that the rounds in which nodes are affected follow a power law distribution for low margins. We can state that for these low margins the system is in a critical state so that the dynamics of the network is similar to systems exhibiting self-organized criticality (SOC) in physics \cite{Bak1988}.

\subsection{Cellular vs. Decentralized}
It is important to understand why the domino effect is an important aspect when it comes to designing resource allocation algorithms for decentralized networks. We do this with a comparison to cellular networks, which do not have the problem of a domino effect. In a cellular network the resources are fixed and controlled strictly by the base station controller. E.g. in a CDMA network, the coding gain determines the number of users that can be supported by a base station. When this limit has been reached, a new user is denied access to the network and ripple effects of change is contained within the cell of the base station.

In a wireless network without centralized control, there is no network entity to deny users access to the network resources. A simple example is Wi-Fi networks. Although too many Wi-Fi access points (APs) might be located in a given region for optimal operation, a person can set up a new Wi-Fi access point without any of the other users having a right to complain. For optimal operation after this new WIFI AP has been set up, the surrounding APs might have to change their transmit parameters such as power and frequency.

\subsection{An Example of a Self-organized Critical System}
A simple example of a dynamic system exhibiting SOC is the "sand pile" model \cite{Bak1988}. Assume we have $N$ variables, where each variable $z_n$ is set to zero initially. Each variable is thought of as a discrete height plateau in a sand pile. A grain of sand is added to one variable at random. When $z_n$ is greater than a critical value $z_c$, the grain tumbles to the next variable $z_{n+1}$. This is illustrated in Fig. \ref{fig:sand-pile} where $z_c = 3$. In this example, if one grain is added to either $z_1$ or $z_2$ the grain tumbles down to $z_3$, sliding either 1 or 2 plateaus down. In this system the length of a tumble, i.e. how many variables that surpass the critical value, also known as the avalanche size, behaves according to a power law. 

\begin{figure}[t]
\centering
\includegraphics[width = 0.6\columnwidth]{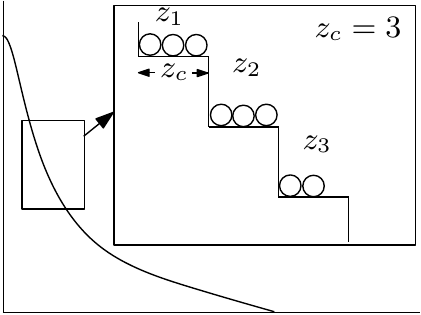}
\caption{Illustration of the sand pile model. Each height plateau is a variable, and each variable can support up to $z_c$ sand grains, in this case $3$.}
\label{fig:sand-pile}
\end{figure}

We can exchange the sand pile model with a set of wireless links, where we assume each link transmits with a power sufficient to support a given SINR. Then assume one link changes its power to support a new SINR requirement. If this change is large enough, "a grain tumbles" meaning some links update their power due to the change. These links are affected in the first "wave". If the accumulated change in power at these links exceeds a given threshold, a new set of links are affected and the avalanche continues to roll.


\section{System Model and Notation}
\label{sec:model}
We consider a single frequency network where the decision variable at all users is power.
We assume each user consists of a transmitter and a receiver, such that transmitter $i$ transmits to receiver $i$. We denote the set of users as $\mathcal{N}$, where $N = |\mathcal{N}|$. We assume each user $i$ has a SINR requirement $\beta_i$, and transmits with a minimum power $p_i$ such that the SINR requirement is satisfied:
\begin{equation}
SINR_i = \frac{l(d_{ii})p_i}{N_0+\sum_{j\in \mathcal{N},j\neq i} l(d_{ji})p_j}\geq \beta_i
\label{eq:SINR}
\end{equation}
where $d_{ij}$ is the distance between transmitter $i$ and receiver $j$, $l(d_{ij})$ is the channel gain between $i$ and $j$ and $N_0$ is the ambient noise variance.

We use this allocation algorithm due to the fact that when a feasible allocation exist, this minimal power allocation will converge \cite{Foschini1993}. This way we are able to separate the domino effect from the cases when ripples of change occur due to the network not converging.

Let $p_i$ be the transmit power of user $i$ after the network has converged. Next, we assume a random user $x$ updates its transmit power such that $p_x' = p_x+ \Delta$. Let $p_i'$ be the transmit power of user $i$ after the network has converged again after user $x$'s update. We are interested in characterizing the set 
\begin{equation}
\mathcal{A} = \{i|p_i'\neq p_i\}.
\end{equation}
Especially, we are interested in the cardinality of $\mathcal{A}$. We can divide $\mathcal{A}$ into non-overlapping subsets $\mathcal{A}_1$, $\mathcal{A}_2$..., where the users belonging to $\mathcal{A}_1$ are the users directly affected by user $x$'s update, i.e. are affected in the first round. The users belonging to $\mathcal{A}_2$ are the users not directly affected by user $x$'s update, but due to the accumulated change in transmit power in the network from user $x$ and the users in $\mathcal{A}_1$, changes its allocation, i.e. are affected in the second round. 

Characterizing $\mathcal{A}$ is difficult for multiple reasons. When simulating a network, multiple variables may be modeled as being drawn from different probability distributions. For instance, for a user $i$ to belong to $\mathcal{A}_1$ we have that
\begin{equation}
p_i^1-p_i>\delta_i
\end{equation}
where $\delta_i$ is the threshold for user $i$ to change its power. After user $x$ updates its power, user $i$ must change its power as follows to satisfy its SINR requirement
\begin{equation}
p_i^1 = \frac{\beta_i(N_0+I_i+\Delta l(d_{xi}))}{l(d_{ii})} = p_i + \beta_i\Delta \frac{l(d_{xi})}{l(d_{ii})}
\label{eq:belong-a1}
\end{equation}
where $I_i$ is the old interference at user $i$ as given in the denominator in (\ref{eq:SINR}). And thus user $i$ changes its power if
\begin{equation}
\beta_i\Delta\frac{l(d_{xi})}{l(d_{ii})}>\delta_i.
\label{eq:change-condition}
\end{equation}
All of these variables may be modeled according to some probability distribution. 

Throughout this paper we assume that the path loss function $l(d)$ is a deterministic function of the distance $d$:
\begin{equation}
l(d) = \frac{1}{d^{\alpha}}
\end{equation}
where $\alpha$ is the path loss exponent.

\subsection{Assumptions and Limitations}
\label{sec:assumptions}
We mentioned that the minimum power allocation scheme is used because when a feasible solution exists, this scheme will converge. Thus we are able to separate the domino effect from the cases when ripples of change occur due to the network not converging. This places constraints on the parameter values used in the simulation results throughout this paper. For instance, the density of a network seems to be an important parameter for the domino effect. It is reasonable to assume that dense networks will have larger domino effect than sparse networks, as users are closer to each other. However, we have not been able to simulate this aspect properly as with denser networks we need smaller SINR targets for the network to converge. For density issues, theoretical analysis seems actually more tractable.

Two other aspects are related to the transmit power distribution of the users. First, in practice each user has some maximum transmit power constraint. This will limit some users' ability to optimally adjust power after an update and also limit the domino effect. However, it is difficult to analytically determine which users are operating near or at the maximum power and thus the analytical work in this paper is an upper bound on the domino effect. In the simulation results we enforce a maximum power constraint to obtain more realistic results.

Secondly, according to (\ref{eq:belong-a1})-(\ref{eq:change-condition}), whether or not a user is affected by a change is determined by a difference in old and new transmit power. In the subsequent sections it will become apparent why this is done. However, from an engineering perspective it would be more useful to have such a criteria in dB. This again depends on the transmit power distribution of the users. E.g. two users may be transmitting at 0.01 W and 0.1 W. Both might have a difference of 0.05 between old transmit power and new transmit power after a change. However, for the first one this corresponds to a change of 7.8 dB, for the other 0.2 dB change.

\begin{figure}[t!]
\centering
\includegraphics[width = 0.8\columnwidth]{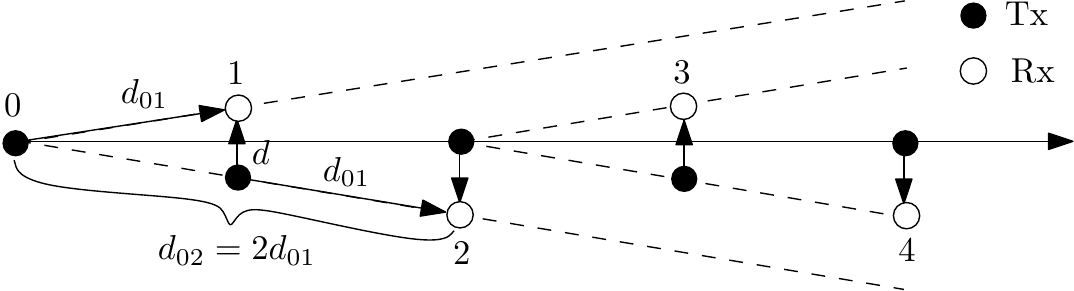}
\caption{Illustration of the array model. Distance between transmitter $i$ and receiver $i$, $d_{ii}$ is the same for all $i$ and $d_{ji} = d_{j+1,i+1} = d_{j-1,i-1}$.}
\label{fig:array-model-ipe}
\end{figure}

\section{Domino Effect in a Deterministic Model}
\label{sec:deterministic}
To illustrate the concept of the domino effect we start with a simple network model where transmitters and receivers are located along a line segment as illustrated in Fig. \ref{fig:array-model-ipe}. We assume the number of transmitter and receiver pairs are infinite. At a time $t_0$ transmitter $0$ updates its power by $1$ unit. We assume that at time $t\leq t_0$ the transmit power at each transmitter has converged such that changes in the network only occur due to the change at transmitter $0$ at time $t_0$.

We let $\beta_i = 1$ and $d_{ii} = d$ for all $i$ and set $\Delta = 1$. From Fig. \ref{fig:array-model-ipe} we have that 
\begin{align}
d_{i,j+1} &= 2d_{i,j} \\
d_{ij+2} &= \sqrt{9d^2_{i,j} -2d^2}. 
\end{align}
However, to simplify the expressions we assume $d_{i,j} = d_{i,j+1}/2 = d_{i,j+2}/3...$.

\begin{figure}[t]
\centering
\includegraphics[width = 0.7\columnwidth]{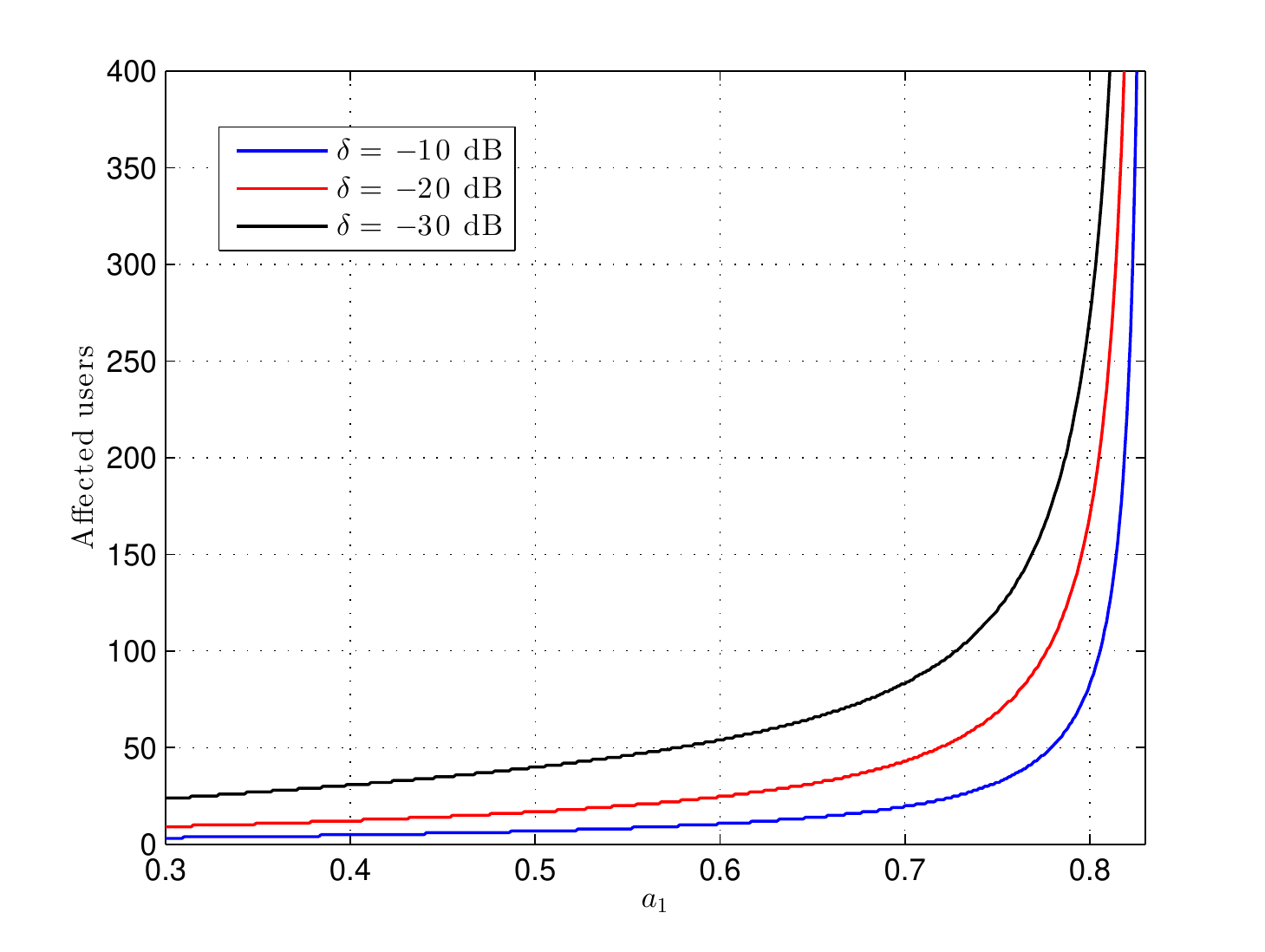}
\caption{Affected number of users as function of $a_1$ and three values of $\delta$ ($0.1 = -10$ dB, $0.01 = -20$ dB and $0.001 = -30$ dB). The path loss exponent $\alpha = 3$.}
\label{fig:array-model}
\end{figure}

We now have that transmitter $1$ will increase its transmit power if $l(d_{01})/l(d)>\delta$. Transmitter $2$ will increase its transmit power if
\begin{equation*}
(l(d_{02})+l(d_{12})\frac{l(d_{01})}{l(d)})/l(d) = \frac{l(d_{01})}{4l(d)}+\frac{l(d_{01})^2}{l(d)^2}>\delta.
\end{equation*}
We see that condition for whether transmitter $n$ will increase its power becomes a sum  depending on $l(d_{01})$ and $l(d)$. To simplify notation, let $a_1 = l(d_{01})/l(d)$, which defines how closely the links are located. Assuming all transmitters up to $n-1$ have increased their power, transmitter $n$ will increase its power if
\begin{equation}
a_n = a_1\sum_{j=0}^{n-1}\frac{a_j}{(i-j)^{\alpha}}>\delta
\end{equation}
For a given $a_1$ and $\delta$ we want to know if there exists a $n$ such that $a_n < \delta$. This would mean that the change of transmitter $0$ affects a finite number of links, although this number may be large.

Fig. \ref{fig:array-model} shows the number of affected users as a function of $a_1$ and $\delta$ for a path loss exponent equal to 3. Decreasing $\delta$ leads more users to be affected by a change. We also see that for each $\delta$ there exists a divergence value for $a_1$ for which the number of affected users goes to infinity.

\section{Poisson Distributed Network}
\label{sec:percolation}
We now let the set of users be distributed over a 2-dimensional area according to a homogeneous Poisson Point Process (PPP) $\Phi$. Specifically we let the transmitter locations be distributed according to a homogeneous PPP with density $\lambda$. To avoid the frequency at which there is no feasible allocation, we let the receiver locations be dependent on the transmitter locations. Thus the receiver locations are also distributed according to a PPP, but these two distributions are not independent.

\begin{figure}[t]
\centering
\includegraphics[width = 0.7\columnwidth]{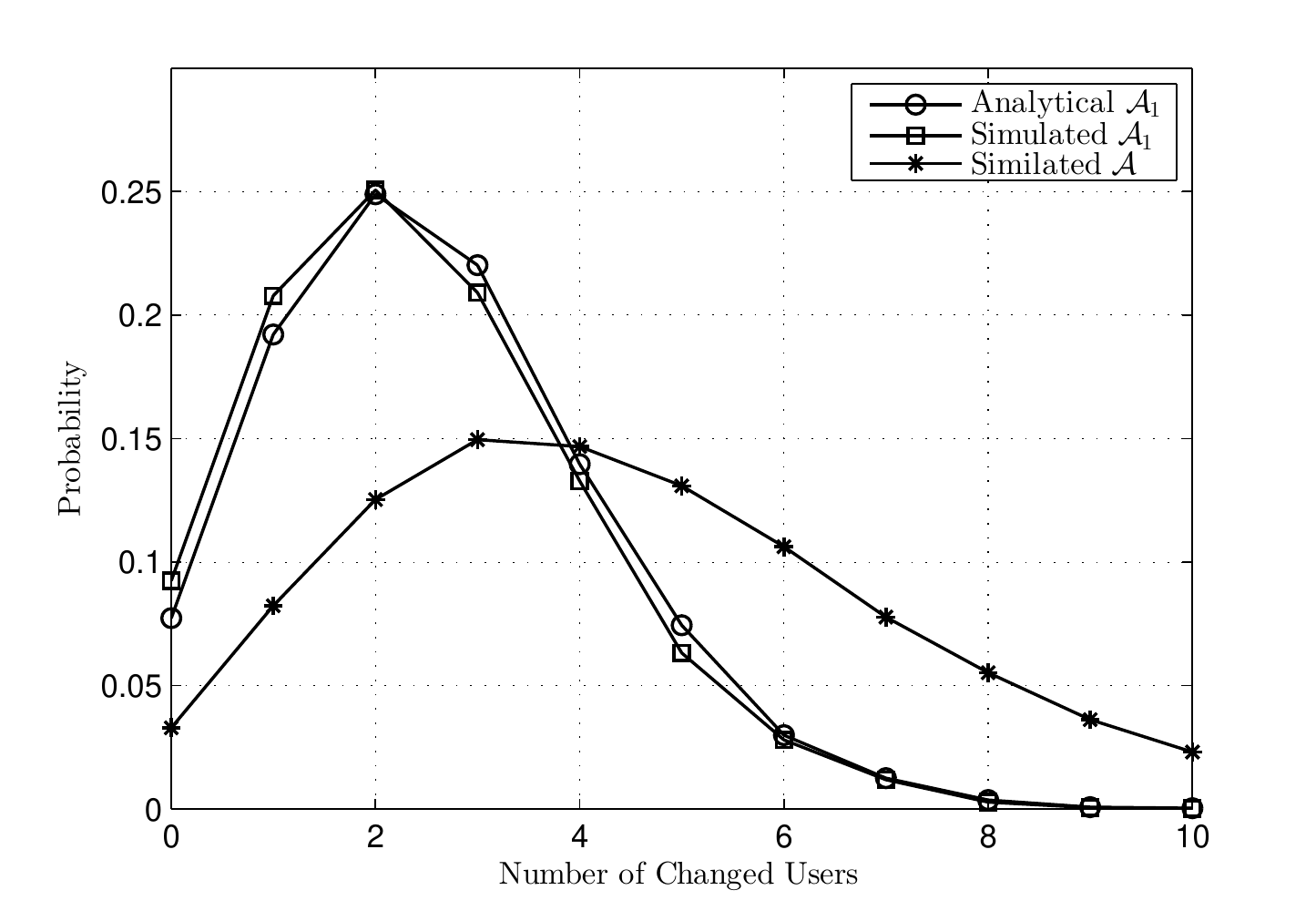}
\caption{Analytical and simulated distributions of $\mathcal{A}_1$ referenced against simulated $\mathcal{A}$ with $\lambda = 4\times 10^{-4}$, $\beta_i = 1$, $d_{ii} = 1/10$, $\alpha=3$ and $\delta_i = -20$ dB for all $i$ and $\Delta = 1$.}
\label{fig:deterministic}
\end{figure}

\subsection{Characterizing $\mathcal{A}_1$}
\label{sec:3-a}
To illustrate how changes in the network can affect different number of users we simplify the model by setting $\beta_i = 1$, $d_{ii} = 1/10$ and $\delta_i = 0.01$ for all $i$ and set $\Delta = 1$.

The probability of user $i$ belonging to $\mathcal{A}_1$ is thus the probability that the path loss between transmitter $x$ and receiver $i$ is larger than $10^{-5}$. By assuming $\alpha=3$, it is the probability that transmitter $x$ is closer than 46.4 meters to receiver $i$. 
A property of Slivnyak's theorem for PPPs is that the properties seen from a point $x \in R^2$ are the same whether we condition on having a point of the PPP at $x$ or not \cite{Haenggi2009}. We therefore have
\begin{align}
&|\mathcal{A}_1| =  \Phi(46.4^2\pi\lambda \backslash \{x\}) = \Phi(46.4^2\pi\lambda) \\
&\mathbb{E}[|\mathcal{A}_1|] = \mathbb{E}[\Phi(46.6^4\pi\lambda\backslash\{x\})] = \mathbb{E}[\Phi(46.4^2\pi\lambda)].
\end{align}

Fig. \ref{fig:deterministic} shows the analytical and simulated distribution of $|\mathcal{A}_1|$ for a density of nodes of $4\times 10^{4}$, referenced against the simulated distribution of $|\mathcal{A}|$. As expected $E[|\mathcal{A}|]>E[|\mathcal{A}_1|$ and the probability of inflicting a large change in the network is larger for $|\mathcal{A}|$ than for $|\mathcal{A}_1|$.

\begin{figure}[t]
\centering
\includegraphics[width = 0.7\columnwidth]{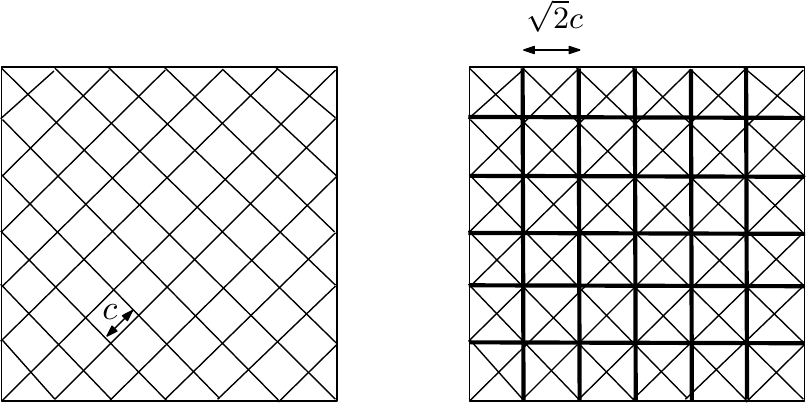}
\caption{The plane divided into squares on the left hand side. On the right hand side, each square is assigned to an edge (bold lines) of a bond percolation model.}
\label{fig:bond-percolation}
\end{figure}
\subsection{Continuum Percolation and Infinite Propagation of Change}
From the estimate of $|\mathcal{A}_1|$ in the previous section we saw that $\mathbb{E}[|\mathcal{A}_1|]$ is finite for all finite values of $\lambda$. As expected, $|\mathcal{A}|$ differed from $|\mathcal{A}_1|$ and we would like to characterize $|\mathcal{A}|$ in some way. Unfortunately we are not able to provide any results on the distribution or moments of $|\mathcal{A}|$, but we are able to provide results regarding the finiteness/infiniteness of $|\mathcal{A}|$. To do this we use standard techniques from continuum percolation \cite{Haenggi2009} to map the random Poisson network on a lattice and use results from bond percolation \cite{CambridgeJournals:2048852}. The result is as follows:
\begin{theorem}
There exists a critical value for $\lambda$, $\lambda_c$, for which values of $\lambda<\lambda_c$ the propagation of change is almost surely (a.s.) finite, and for values of $\lambda>\lambda_c$ the propagation of change is infinite a.s.
\end{theorem}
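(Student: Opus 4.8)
The plan is to sandwich the propagation set $\mathcal{A}$ between two continuum percolation clusters and then invoke the known phase transition of the Gilbert disc model, realizing its critical density through the square-to-bond mapping sketched in Fig.~\ref{fig:bond-percolation}. The first step is to convert the domino dynamics into a purely geometric connectivity statement. Working with the homogeneous choices $\beta_i=1$, $\delta_i=\delta$, $d_{ii}=d$, condition \eqref{eq:change-condition} shows that a single already-changed node $j$ forces every node $i$ whose receiver lies within a radius $r_0$ of transmitter $j$ to change, where $r_0$ solves $\Delta\, l(r_0)/l(d) = \delta$. Since the cumulative change at a node can only exceed the contribution of one in-range neighbour, $\mathcal{A}$ always contains the connected component of the updating node $x$ in the Gilbert graph $G(\Phi,r_0)$, in which two points are joined whenever they lie within distance $r_0$. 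This yields the lower bound needed for the supercritical regime.

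For the supercritical direction I would partition the plane into squares of side $r_0/\sqrt5$, so that any two points in edge-adjacent squares are within distance $r_0$, and declare the bond associated with a square open when that square contains at least one point of $\Phi$. Because $\Phi$ assigns independent counts to disjoint squares, these bonds are independent and each is open with probability $1-e^{-\lambda r_0^2/5}$, which increases in $\lambda$ and tends to $1$. For $\lambda$ large enough this exceeds the critical probability $p_c<1$ of bond percolation on the square lattice \cite{CambridgeJournals:2048852}; the resulting infinite open cluster corresponds to an infinite $r_0$-cluster, so an infinite cluster exists a.s.\ and the propagation emanating from a node within it is infinite. Hence the ``infinite'' regime is nonempty.

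For the subcritical direction I would bound $\mathcal{A}$ from above, and this is where I expect the real obstacle. A node may be triggered not by a single close neighbour but by the accumulated change of many distant ones, so a naive single-hop domination is not exact. I would exploit the decay $l(d)=d^{-\alpha}$: since the per-node contribution to the change at a receiver falls off as $d^{-\alpha}$ with $\alpha>2$, the aggregate change a node can receive from all points beyond a radius $r_1$ is summable and can be forced below $\delta$, so that being triggered requires at least one already-affected node within $r_1$. The propagation is then stochastically dominated by a Galton--Watson process whose offspring count is the number of $\Phi$-points within distance $r_1$, i.e.\ Poisson with mean $\lambda\pi r_1^2$; when $\lambda\pi r_1^2<1$ this process dies out a.s., so $|\mathcal{A}|<\infty$ a.s. Finally, monotonicity of both comparison models in $\lambda$ lets me set $\lambda_c=\sup\{\lambda:|\mathcal{A}|<\infty\text{ a.s.}\}$, which is strictly positive and finite by the two bounds, giving the stated dichotomy. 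The main difficulty will be making the cumulative-change domination uniform, namely choosing a single $r_1$ for which the branching comparison stays valid despite the long-range, polynomially decaying coupling between nodes.
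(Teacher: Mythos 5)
Your overall architecture matches the paper's: the supercritical direction via the same square-to-bond mapping onto $\mathbb{Z}^2$ bond percolation with independent open probabilities $1-e^{-\lambda s^2}$ compared against $p_c=1/2$, and the subcritical direction via domination by a subcritical Galton--Watson process. One minor slip first: for the Gilbert-graph lower bound on $\mathcal{A}$ you must use the radius associated with the \emph{smallest} change an affected node is guaranteed to make, namely $r(\delta)$ where $r(\xi)=\max\{d:\beta\xi\, l(d)/l(d_{ii})>\delta\}$, not the radius $r_0=r(\Delta)$ associated with the initial update. A node in $\mathcal{A}_n$ may have increased its power by barely more than $\delta$, so it is only guaranteed to trigger neighbours within $r(\delta)\leq r(\Delta)$; with $r(\Delta)$ your containment claim is false, though the repair costs nothing since $r(\delta)>0$ and the percolation argument goes through unchanged with the smaller radius (this is exactly why the paper assumes every non-initial update equals $\delta$).

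The substantive gap is the one you flag at the end and do not close: the existence of a single radius $r_1$ such that the aggregate change arriving at a receiver from all affected nodes farther than $r_1$ is below $\delta$. As stated this cannot hold deterministically: that contribution is a shot-noise sum $\sum_j \Delta_j d_{ji}^{-\alpha}/l(d_{ii})$ over affected points of a PPP, which is a.s.\ finite for $\alpha>2$ but has unbounded support (arbitrarily many points can lie just beyond $r_1$), and the individual changes $\Delta_j$ are not a priori bounded because they compound along the propagation. There is a circularity --- bounding the number of affected nodes requires bounding the accumulated change, which depends on how many nodes have been affected --- and a single uniform $r_1$ does not break it. The paper breaks it differently: it reduces to one affected node per round (arguing in a footnote that this does not alter $|\mathcal{A}|$), defines a round-indexed sequence of trigger radii $r_n=r(\Delta_n)$ through the explicit recursion (\ref{eq:delta-n}), and shows that the $r_n$ remain finite by comparing the resulting series with $\sum_i i^{-\alpha}$, $\alpha>1$; only then can $\lambda$ be chosen so that the expected offspring count per round stays below one and the Galton--Watson extinction criterion applies. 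Without a lemma of this type controlling the compounded long-range contributions, your subcritical branching comparison is not justified. Your closing observation --- that monotonicity in $\lambda$ is needed to turn the two one-sided bounds into a single threshold $\lambda_c$ --- is correct, and is in fact a step the paper itself leaves implicit.
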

\begin{proof}
Define $r$ as
\begin{equation}
r(\xi) \triangleq \max\{d:\beta\xi \frac{l(d)}{l(d_{ii})}>\delta\}
\end{equation}
for some variable $\xi$. We now say that two points $i$ and $j$ are connected if the distance $d_{ij}$ between them is less or equal to $r(\xi)$ for some $\xi$.
We prove the theorem in two parts: first the existence of infinite propagation of change for large $\lambda$ and secondly the absence of infinite propagation of change for small $\lambda$. 
\subsubsection{Existence of infinite propagation of change for large $\lambda$}
Assume that the power update of all nodes except the initial node is $\delta$. If we can prove the existence of infinite propagation of change in this case, it must also hold for other update values as $\delta$ is the lowest possible value. We now map the model onto a bond percolation model. The plane is divided into squares of size $c = r(\delta)/2\sqrt{2}$ as shown in Fig. \ref{fig:bond-percolation}. Each square corresponds to a potential edge of the bond percolation lattice of $\mathbb{Z}^2$. The edge is added if at least one point of the Poisson process falls into this square. Each edge therefore exists with probability $p = 1-\text{exp}(-\lambda c^2)$, which is independent of all other edges. It is known that the bond percolation model of the infinite square lattice in $\mathbb{Z}^2$ percolates if the edges between nearest neighboring points of the lattice exist with probability $p\geq1/2$ \cite{CambridgeJournals:2048852}. Thus, if $\lambda\geq \log 2/c^2$, the edge percolation model contains an infinite cluster a.s.

Now, if two edges are adjacent, it means that at least two points of the Poisson process are located in squares that share at least one corner, and therefore that the maximum distance between the two points is $2\sqrt{2}c = 2\sqrt{2}r(\delta)/2\sqrt{2} = r(\delta)$ and thus they can affect each other in the original model. As an infinite set of connected edges in the bond percolation model (i.e. infinite cluster) corresponds to an infinite set of connected points in the original model, there exists a path of infinite propagation of change a.s. if $\lambda\geq \log 2/c^2$.

\subsubsection{Absence of infinite propagation of change for small $\lambda$}
Let $r_0$ be the distance $r_0 = r(\Delta)$. Similarly, let $r_n$ be the distance $r_n = r(\Delta_n)$, where $\Delta_n$ will be defined below. Consider an initial node $o$ that does an update $\Delta$. The nodes initially affected by this change belong to the set $\mathcal{A}_1$. The expected number of nodes in $\mathcal{A}_1$ is $\lambda\pi r_0^2$. Now, set $\lambda<1/(\pi r_0^2)$ such that the expected number of affected nodes by the initial update is less than one. We now assume that only one node is affected in each "round", i.e. $|\mathcal{A}_n| = 1$.\footnote{This does not affect the total number of affected users. As an example consider the case where there would be 2 affected users in one round, i.e. $|\mathcal{A}_n|=2$. These two are affected independently of each other and we can divide this set into two sets $\mathcal{A}^1_n$ and $\mathcal{A}^2_n$. Now, all nodes that would be affected in $\mathcal{A}_{n+1}$ will still be affected as the accumulated change is the same.}. 
The maximum distance $r_n$ from a node in $\mathcal{A}_n$ to a node in $\mathcal{A}_{n+1}$ is $r(\Delta_n)$ and we have that $\Delta_n$ is given as
\begin{equation}
\Delta_n =2^{\alpha}\bigl(\frac{\Delta}{(\sum_{j=0}^{n-1}r_j)^{\alpha}} + \sum_{i=1}^{n-1}\frac{\delta}{(\sum_{j=i}^{n-1}r_j)^{\alpha}} + \delta\bigr)
\label{eq:delta-n}
\end{equation}
The factor $2^{\alpha}$ comes from the fact that with a maximum radius $r$, the expected distance of a node within this area is $r/2$. The question is now whether $r_{\infty}$ is finite or not. If $r_{\infty}$ is infinite then we will not be able to avoid percolation even for sufficiently small $\lambda$, as long as $\lambda>0$. Since $r_n$ is just a product depending $\Delta_n$, $r_n$ is finite if $\Delta_n$ is finite.

The last term in (\ref{eq:delta-n}) is a constant so we can drop this term. Also, for sufficiently large $n$ the first term is zero. Hence, we focus on the sum $\sum_{i=1}^{n-1}\delta/(\sum_{j=i}^{n-1}r_j)^{\alpha}$ as $n\rightarrow \infty$. We also have that
\begin{align}
\Delta^{'} &= \frac{\delta}{(\sum_{j=1}^{\infty}r_j)^{\alpha}}  + \sum_{i=2}^{\infty}\frac{\delta}{(\sum_{j=i}^{\infty}r_j)^{\alpha}} \label{eq:delta-prime-1} \\
&\leq \frac{\delta}{(\sum_{j=1}^{\infty}r_j)^{\alpha}} + \sum_{i=2}^{\infty}\frac{\delta}{(\sum_{j=1}^{i}r_j)^{\alpha}} \label{eq:delta-prime-2}
\end{align}
when $r_i\leq r_{i+1}, \forall i>1$. Now, let
\begin{equation}
S = \sum_{i=2}^{\infty}\frac{\delta}{(\sum_{j=1}^i r_j)^{\alpha}} = \delta \sum_{i=2}^{\infty} \biggl(\frac{1}{a_i}\biggr)^{\alpha}.
\end{equation}
The series $S$ converges if $a_i\geq i, \forall i>1$ and $\alpha>1$, since $\sum_{i = N}^{\infty} 1/i^{\alpha}$ converges for any $N>0$ when $\alpha>1$. $a_i>i, \forall i>1$ holds whenever $r_2\geq 2$.

Thus, as long as $r_2\geq 2$, $r_n<\infty$ for all $n$ and we can choose $\lambda$ small enough so that the average number of affected nodes in each round is always less than 1. This process is equivalent to a Galton-Watson process and it is known that when the average number of children per individual in a Galton-Watson process is smaller than one, the process dies out with probability 1 \cite{harris:2002}.

\textit{Remark:} The inequality between (\ref{eq:delta-prime-1}) and (\ref{eq:delta-prime-2}) holds as long as $r_i\leq r_{i+1}, \forall i>1$. If this is not the case for some $i$, then we have a maximum value of $r_i$ for some $i<\infty$. Since $i$ must be less than infinity, this maximum must also be less than infinity and we can choose $\lambda$ small enough so that the average number of affected nodes from this distance is less than one. Thus the result holds.
\end{proof}
Although we have proved that a critical density exists, we have not provided any value for it. From the proofs of both existence and absence of percolation we also see that such a critical density depends on the other system parameters $\beta_i$, $d_{ii}$ and $\delta$. For fixed $\beta_i$ and $d_{ii}$, the critical density increases with $\delta$.


\subsection{Simulation Results and Self-organized Criticality}
As we have not been able to characterize $|\mathcal{A}|$ analytically, we investigate the distribution of $|\mathcal{A}|$ through simulations in this section. However, we do have some expectations of the distribution of $|\mathcal{A}|$ depending on the distributions of the individual $|\mathcal{A}_i|$s. If $|\mathcal{A}_i|<<|\mathcal{A}_1|$, for $i>1$, we expect $|\mathcal{A}|$ to be almost Poisson distributed since $|\mathcal{A}_1|$ is Poisson distributed. If $\mathbb{E}[|\mathcal{A}_i|]>0$ it is less clear how $|\mathcal{A}|$ should be distributed. It is well known that the sum of independent Poisson variables are still Poisson. But in this case $|\mathcal{A}_2|$ is not independent of $|\mathcal{A}_1|$, since if a realization of $|\mathcal{A}_1|$ is small, we also expect $|\mathcal{A}_2|$ to be small. Whether or not the sum is Poisson depends on the conditional probability function \cite{jacod1975}. Through the simulation results, it seems like the resultant distribution is still Poisson.

\begin{figure}[t]
\centering
\includegraphics[width = 0.725\columnwidth]{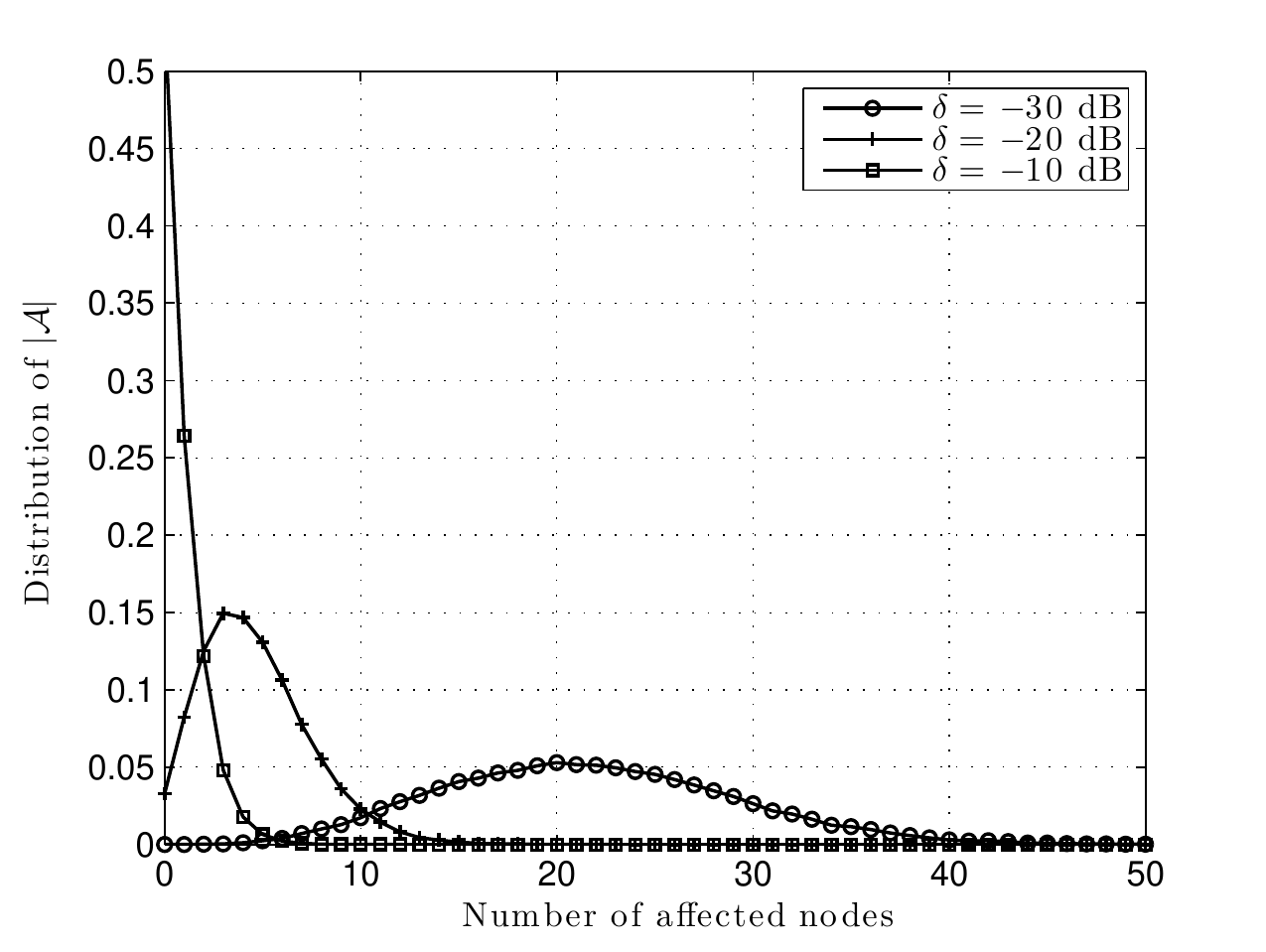}
\caption{Distribution of $|\mathcal{A}|$ for different values of $\delta$ and $\lambda = 4\times 10^{-4}$.}
\label{fig:different-As}
\end{figure}

\begin{figure}[t]
\centering
\includegraphics[width = 0.725\columnwidth]{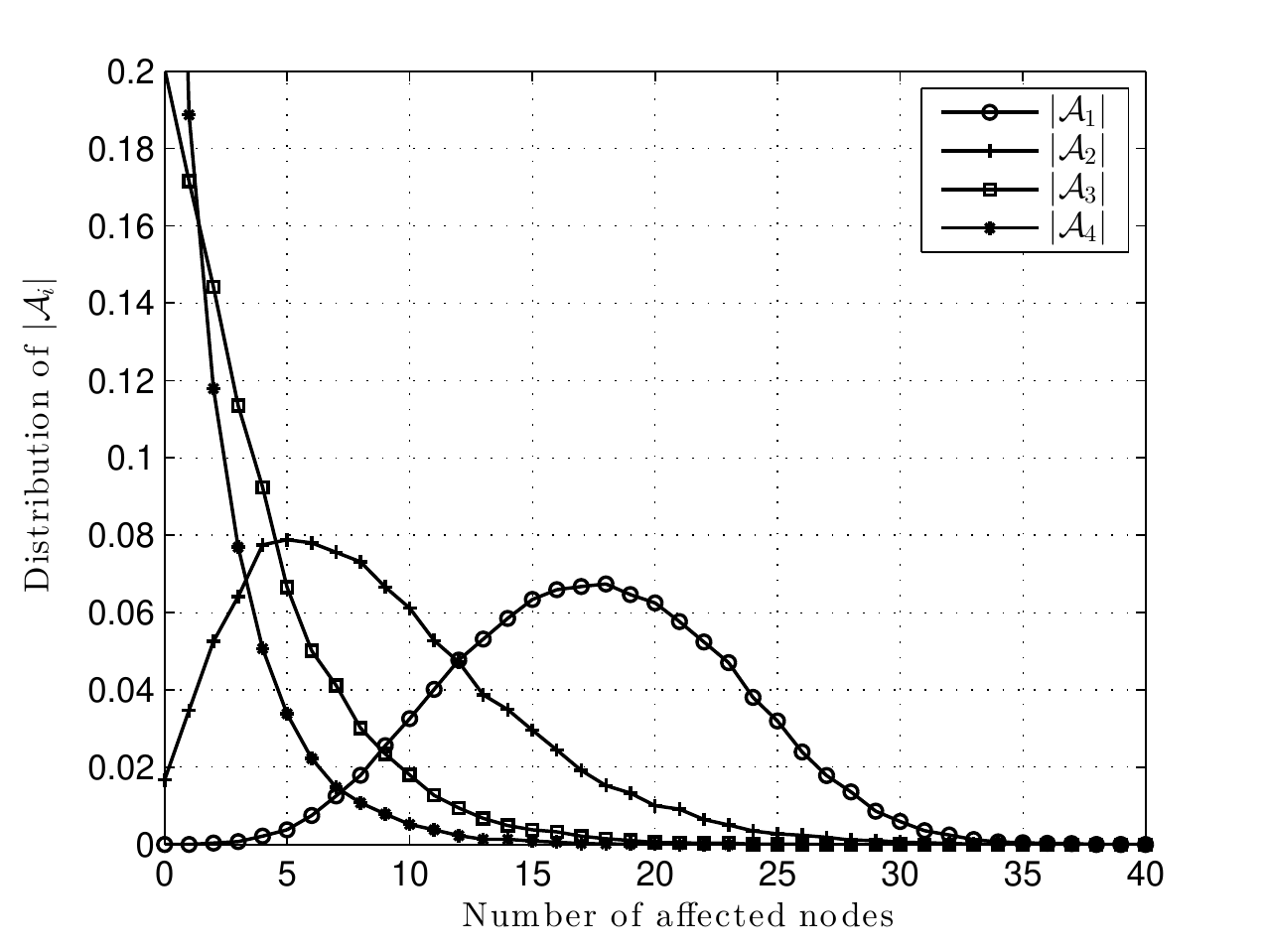}
\caption{Distribution of the different $|\mathcal{A}_i|$s for $\delta = -30$ dB and $\lambda = 4\times 10^{-4}$.}
\label{fig:different-Ais}
\end{figure}

\begin{figure}[t]
\centering
\includegraphics[width = 0.725\columnwidth]{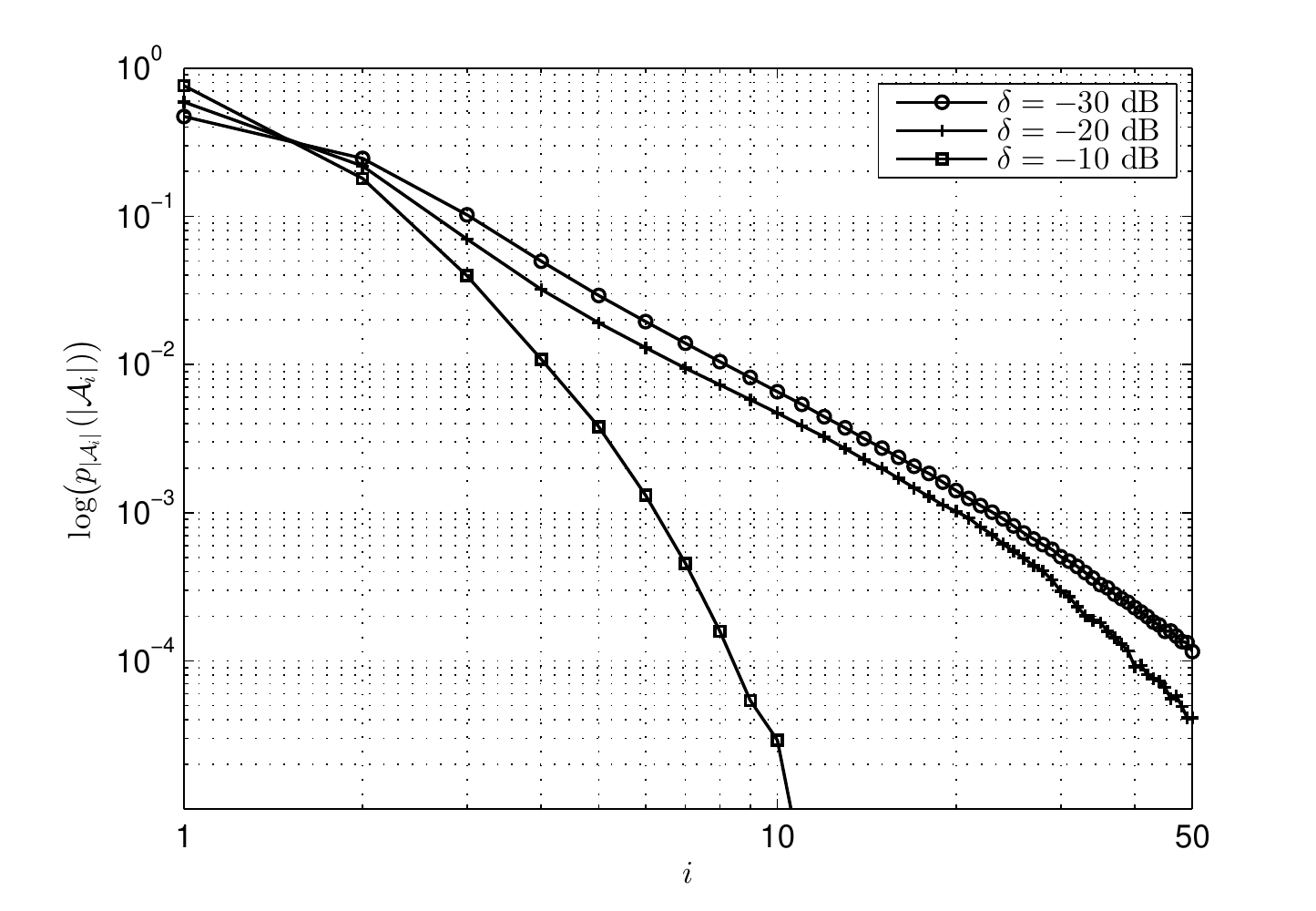}
\caption{Fraction of $|\mathcal{A}|$ in the different $|\mathcal{A}_i|$s in a loglog plot for different values of $\delta$ and $\lambda = 4\times 10^{-4}$.}
\label{fig:loglog}
\end{figure}

We have used the same parameters as in Section \ref{sec:3-a}, namely $\lambda = 4\times 10^{-4}$, $\beta_i = 1$, $d_{ii} = 10$, $\Delta = 1$, $\alpha = 3$, $N_0 = 10^{-8}$. 
As mentioned in Section \ref{sec:assumptions}, we did not assume a maximum power constraint in the analytical work in the previous sections. In the simulations we assume a maximum transmit power of 1 W, and thus even though a transmitter should update its power according to (\ref{eq:change-condition}), if the maximum power will be surpassed by such a change the transmitter will not update its power.

In Fig. \ref{fig:different-As} the distribution of $|\mathcal{A}|$ is plotted for different values of $\delta$. For high values of $\delta$, $|\mathcal{A}_i|<<|\mathcal{A}_1|$, for $i>1$ and the plots for $\delta = -10$ dB and $-20$ dB are almost Poisson distributed. 

In Fig. \ref{fig:different-Ais} the distribution of the different $|\mathcal{A}_i|$s for $\delta = -30$ dB and $\lambda = 4\times 10^{-4}$ is given. We see that all the $|\mathcal{A}_i|$s seem to follow a Poisson distribution with different expected values.

In Fig. \ref{fig:loglog} we have plotted which $\mathcal{A}_i$ a fraction of $\mathcal{A}$ belongs to. We see that this distribution follows a power law for small $\delta$. In relation to self-organized criticality, we see that the size of avalanches, where avalanche size is given as the maximum $i$ for which $|\mathcal{A}_i|$ is nonzero, follows the same characteristics of a system in a critical state. However, for $\delta = -10$ dB, the curve does not follow a power law distribution and we expect this state to be non-critical.


\begin{figure}[t]
\centering
\includegraphics[width = 0.725\columnwidth]{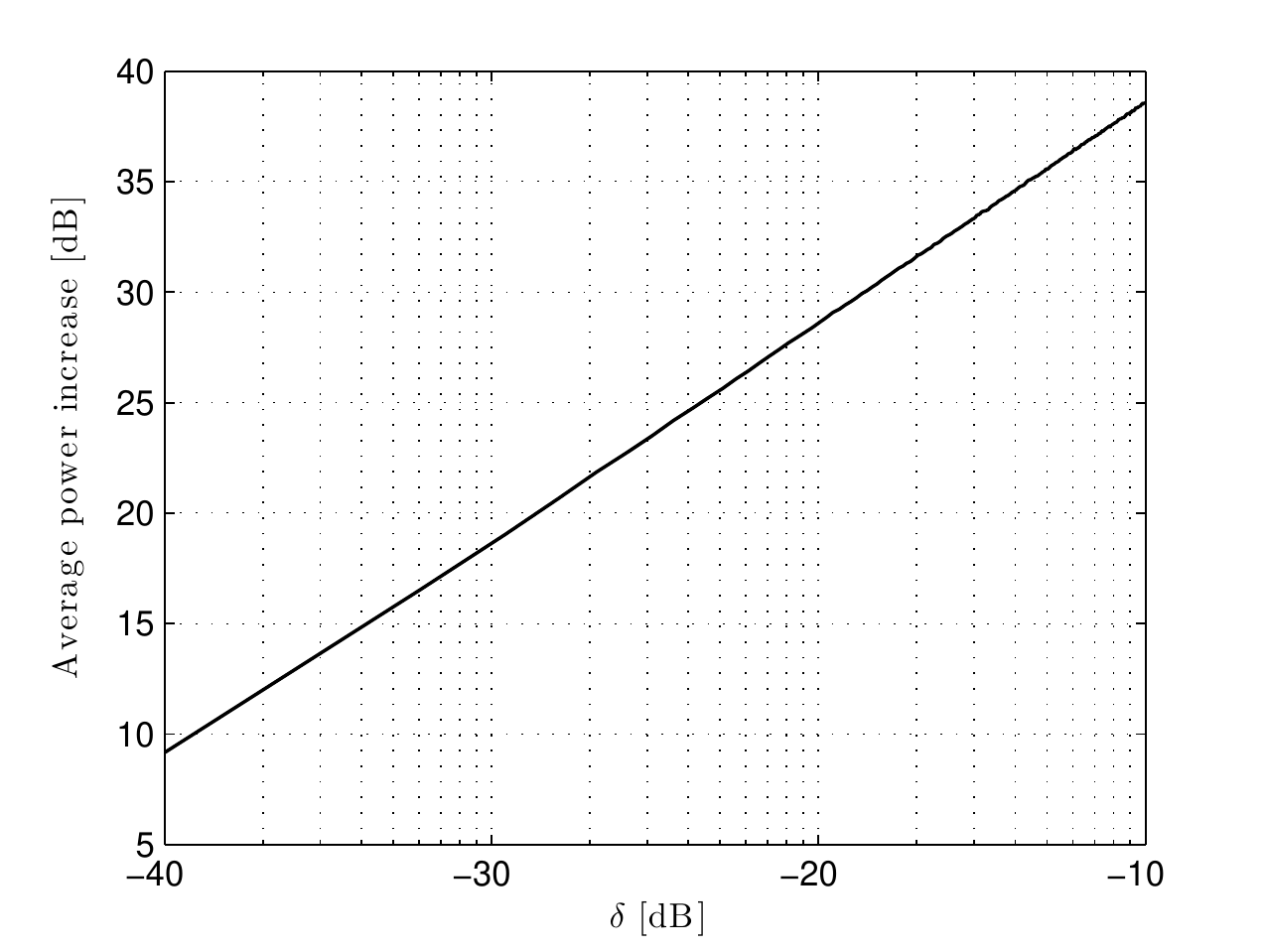}
\caption{Average power increase in dB for different values of $\delta$ with $\lambda = 4\times 10^{-4}$.}
\label{fig:avg-power-increase}
\end{figure}

From the results presented above, we see that a large number of users can be affected by a change in a manner similar to a domino effect. The number of users that are affected depends on the threshold $\delta$. For analytical purposes it was convenient to define the condition for when a user has to update its power (\ref{eq:change-condition}) on a difference between old power and new necessary power. However, from an engineering aspect it would be better to have this in terms of dB. In Fig. \ref{fig:avg-power-increase}, the average power increase in dB is shown for different values of $\delta$ for the same simulation parameters used in results Fig. \ref{fig:different-As}-\ref{fig:loglog}. As we see, even a modest threshold of $\delta = -30$ dB, leads to an average power increase over all users of 18 dB. This point out the difficulty in choosing a suitable threshold, as a high threshold leads to a low domino effect, but leads to an increased power which degrades the performance.


\section{Conclusion and Future Work}
\label{sec:conclusion}
In this paper we have investigated the issue of propagation of change in decentralized wireless networks, an issue largely lacking in the literature. This manifests itself as a domino effect in the network can arise due to new users entering the network or old ones leaving or updating their resource allocation. As most proposed resource allocation algorithms are dynamic and adaptive, they change according to the current state of the network in order to utilize the resources in an optimal manner. However this leads to ripples of change propagating through the network. In this paper we showed that there exists a critical network density, for which a larger density will propagate change infinitely almost surely, while smaller densities will propagate change finitely almost surely. Through simulation results we showed how the number of affected nodes follows different Poisson distributions in a network distributed according to a Poisson Point process, that depends on the threshold for when a node is considered to be affected. While the domino effect is a decreasing function of threshold, one cannot simply set a high threshold as this leads to significant increase in power and degradation of system performance. Thus the optimal threshold is a trade-off between having high system performance while still maintaining resilience to the domino effect. 

While this paper has highlighted some of the issues regarding change in decentralized wireless network, there are still important open issues. One is to give upper and lower bounds on the critical density for percolation. Another is to give a characterization of the trade-off one gets from limiting the domino effect while maximizing system performance. 

Lastly, this paper only considered power control in a Poisson distributed network. We did this because a simple power control algorithm is guaranteed to converge when a feasible solution exists and hence we were able to separate change due to convergence and change due to the domino effect. An equally important aspect is how frequency allocation affects the domino effect. As frequency is a discrete variable it might in some cases decrease the propagation of change as it can be seen as a thinning of the Poisson process. On the other hand, a slight change in power at one transmitter might result in a frequency change at another which can escalate the domino effect.

\bibliographystyle{ieeetr}
\bibliography{bib,library}{}

\end{document}